\newcommand{\bv}[1]{{\boldsymbol{#1} }}
\newtheorem{lemma}{\textbf{Lemma}}
\begin{document}

\title{A Multi-Layer Market for Vehicle-to-Grid Energy Trading in the Smart Grid}
\author[1]{Albert Y.S. Lam
\thanks{Email: \href{mailto:ayslam@eecs.berkeley.edu}{ayslam@eecs.berkeley.edu}
To whom correspondence should be addressed.}}
\author[1]{Longbo Huang
\thanks{Email: \href{mailto:huang@eecs.berkeley.edu}{huang@eecs.berkeley.edu}}}
\author[1]{Alonso Silva
\thanks{Email: \href{mailto:asilva@eecs.berkeley.edu}{asilva@eecs.berkeley.edu}
To whom correspondence should be addressed.}}
\author[2]{Walid Saad
\thanks{Email: \href{mailto:walid@miami.edu}{walid@miami.edu}}}
\affil[1]{University of California, Berkeley}
\affil[2]{University of Miami}
\date{}

\maketitle

\begin{abstract}
In this work, we propose a multi-layer market for vehicle-to-grid energy trading. In the macro layer, we consider a double auction mechanism, under which the utility company act as an auctioneer and energy buyers and sellers interact. This double auction mechanism is strategy-proof and converges asymptotically. In the micro layer, the aggregators, which are the sellers in the macro layer, are paid with commissions to sell the energy of plug-in hybrid electric vehicles (PHEVs) and to maximize their utilities. We analyze the interaction between the macro and micro layers and study some simplified cases.
Depending on the elasticity of supply and demand, the utility is analyzed under different scenarios.  Simulation results show that our approach can significantly increase the utility of PHEVs. 
\end{abstract}

\section{Introduction}
The rising oil prices combined with the ongoing trend for developing environmental-friendly technological solutions, implies that electrically-operated vehicles will lie at the heart of future transportation systems. In particular, it is envisioned that plug-in hybrid electric vehicles~(PHEVs), which are essentially electric vehicles equipped with storage devices, will constitute one key component towards realizing the vision of green, environment-friendly transportation networks. For instance, it is forecast that up to 2.7~million electric vehicles will be put on the road in the United States, by 2020~\cite{evnum}.

The presence of energy storage devices implies that PHEVs can not only serve as a green means of transportation, but also, if properly configured, they can function as a ``moving'' energy reservoir that can store and, possibly, supply power back to the power grid. While current PHEV deployment are mostly concerned with grid-to-vehicle interactions, enabling two-way vehicle-to-grid~(V2G) interactions between the grid and PHEVs, has recently started to receive considerable attention both in research and standardization agencies~\cite{PHEV07,UC01,UC04,PHEV09,PHEV10} and is expected to lie at the heart of the emerging smart grid system.

Enabling V2G interactions has several advantages such as serving as backup power sources during outages or supplying ancillary services back to the grid for regulation services. However, in order to fully reap the benefits of V2G systems, several key challenges must be addressed at different level such as control, communications, charging behavior, implementation, and market mechanisms. In \cite{v2g-islanded10}, the authors propose a scheme that uses PHEV batteries to absorb the randomness in intermittent wind power generation. The authors in \cite{Hamed} study the use of game theory for providing frequency regulation through V2G operation. Using the PHEVs as storage units is studied and analyzed in~\cite{PHEV-managing11} while communication architectures suitable for V2G systems are discussed in~\cite{v2g-architeccture11}. Further, the authors in~\cite{v2g-ancillary11} considers the problem of optimally providing energy and ancillary services using electric vehicles.

Clearly, most existing work are focused on implementation, communication, and energy transfer in V2G systems. However, the need for energy transfer and exchange from PHEVs to the grid has also an economical aspect that must be addressed. In this respect, the work in~\cite{Walid} sheds a light on this aspect by investigating the price and quantities exchanged if the PHEVs and the grid elements form an energy trade market. Beyond~\cite{Walid}, little work seems to have been focused on the economics of V2G exchanges which are essential for a better understanding on the potential of using V2G in future power grid systems.

The main contribution of this paper is to propose a general framework and algorithm for studying the economics of the market emerging between PHEVs, aggregators, and the smart grid elements. To address this problem, we propose a multi-layered market mechanisms in which the agreggators, the PHEVs, and the grid elements can decide on the quantity and prices at which they wish to trade energy while optimizing the tradeoff between the benefits (e.g., revenues) and costs from this energy exchange. In this proposed market, first, the aggregators and the smart grid elements (e.g., substations) submit their reservation prices and bids so as to agree on a price and energy trading mechanism. These interactions are modeled using a double auction whose result is then fed back into the second layer, which deals with the management of PHEV resources at each aggregator. In this layer, within each PHEV group, the aggregator negotiates with the PHEVs to settle for an agreement on the resource usage. In particular, the aggregator will announce its energy buying price to the PHEVs, and each PHEV determines how much energy it is willing to supply to the aggregator. The outcome of this layer is directly linked to the previous market due to the fact that the aggregator has to carefully balance its earnings from the energy market and its payments to the PHEVs within its group. Hence, unlike existing work such as in~\cite{Walid}, the proposed scheme depends, not only on pricing issues, but also on modeling the PHEVs-to-aggregator interactions (from an economical perspective) as well as on providing incentives for the PHEVs to participate in the foreseen market and, hence, it can leverage V2G, for improving the overall smart grid performance. We characterize the equilibria resulting from the proposed multi-layered market and we show their existence. Further, using linear approximation techniques we provide a large-system analysis on the economics of V2G energy trading. Using simulations, we assess the properties and performance of the energy trading mechanisms resulting from proposed scheme and we show that our approach can significantly increase the utility of PHEVs.

%
%
%
%

The paper is organized as follows: In Section \ref{section:model}, we present the proposed system model. In Section \ref{section:linear-approx}, we analyze the system for different PHEV energy supply costs. Simulation results are discussed and analyzed in Section \ref{section:simulation} while conclusions are drawn in Section~\ref{section:conclusion}.


\section{The Model and the Market Mechanism} \label{section:model}

 \begin{figure}[t!]
 \centering
 \includegraphics[height = 2.4in, width=2.2in]{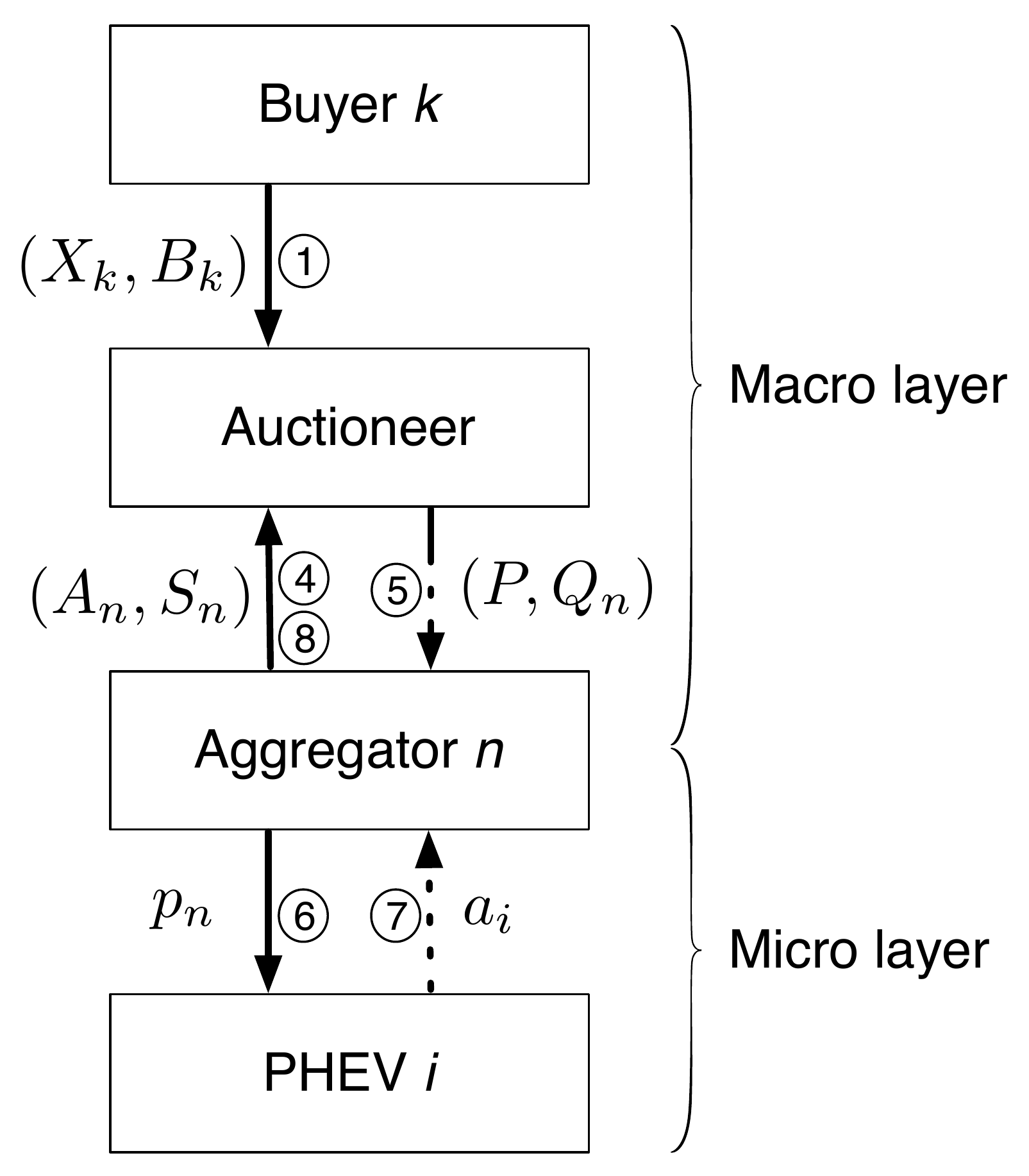}
 \caption{The two-layer market model. The numbers in the figure correspond to the step numbers in Algorithm~\ref{market_mechanism}.}
 \label{fig:market}
 \end{figure}
Consider a smart grid system consisting of $K$ grid elements (e.g., substations) with $\mathcal{K}=\{1,\ldots,K\}$ denoting the set of all such elements. In this grid, $N$ electric vehicles aggregators are deployed. We let $\mathcal{N}=\{1,\ldots,N\}$ denote the set of all aggregators. Each aggregator~\mbox{$n\in \mathcal{N}$} manages  a group of PHEVs denoted by $\mathcal{I}_n=\{1,\ldots,I_n\}$. In this model, we are particularly interested in smart grid elements that are unable to meet their demand and, hence, need to buy energy from alternative sources such as the PHEV aggregators. Hence, hereinafter, all grid elements are referred to as buyers while the aggregators are referred to as sellers.

The energy exchange process between the aggregators and the grid elements is modeled using a two-layered market model as shown in Fig.~\ref{fig:market} with the utility company's control center acting as a middleman that handles the prospective energy trading mechanisms. On the one hand, at the first layer, referred to as the \emph{macro layer}, the aggregators and the grid elements interact so as to trade energy. At this layer, the buyers wish to optimize their performance and meet their demand by buying energy from the PHEVs through the aggregators while the aggregators wish to strategically choose their price and quantity to trade so as to optimize their revenues. On the other hand, at the second layer, referred to as the \emph{micro layer}, the aggregators must interact with the PHEVs so as to optimize the energy resources and provide incentives for the PHEVs to actually participate in the trade. Clearly, the outcomes of these two layers are coupled and, thus, any market solution must take into account this inter-layer dependence. Below, we discuss and analyze, in details, the market operation at each layer.

\subsection{Macro Layer}

An auction is a mechanism for buying or selling goods or services by taking bids and then selling them to the highest bidders.
The main approach to study this mechanism is game theory by considering the bidders as the players of this game,
and their bids as their strategies. For more details on auction theory see~\cite{krishna}.
For the completeness of this work, here we give an overview
of the double auction mechanism~\cite{Huang2002,doubleauction2,Walid}.
In this layer, each potential seller or aggregator~$n\in\mathcal{N}$ 
sends the quantity of energy~$A_n$ that it intends to supply 
and its reservation price~$S_n$ to the auctioneer. 
The reservation price sent by the potential sellers corresponds to the minimum price at which the seller is willing to sell its offered amount of energy. Each buyer~$k\in\mathcal{K}$ proposes a bid $B_k$ and  the quantity it requests, denoted by $X_k$, to the auctioneer. Here, we are mainly focused on the interactions between buyers and sellers in a certain window of time during which the bids and reservation prices do not vary. This can correspond to an energy trading market in which decisions are based on medium or long-term energy needs such as in a day-ahead market. In each round, each aggregator $n$ decides its own $A_n$ with the fixed $S_n$. After receiving all $A_n$'s,
 the auctioneer determines the price~$P(\textbf{A})$ of the energy, where $\textbf{A}=(A_n,1\leq n \leq N)$, and $Q_n(\textbf{A})$,
which corresponds to the total quantity sold by aggregator $n$,
$\forall n\in \mathcal{N}$, by a double auction.

In this work, we adopt a double auction mechanism based on~\cite{Huang2002, Walid, doubleauction1, doubleauction2}.
In this auction:
\begin{itemize}
\item we order the sellers in an
increasing order of their reservation price.
W.l.o.g. we consider
\begin{equation}\label{increasing_supplies}
S_1<S_2<\ldots<S_N.
\end{equation}
\item we order the buyers
in a decreasing order 
of their reservation bids. 
W.l.o.g. we consider
\begin{equation}\label{decreasing_bids}
B_1>B_2>\ldots>B_K.
\end{equation}
\item if two sellers (respectively, buyers)
have equal reservation prices (bids),
they are aggregated into one
single ``virtual'' seller (or buyer).
\item we generate the supply curve
(selling reservation price $S_n$
versus the amount of energy
put out for sale~$A_n$) 
\item we generate the demand curve 
(offered bids~$B_k$ versus quantity needed~$X_k$). 
\item we find an intersection point.
\end{itemize}
%

%
This intersection is at the level
of a certain seller~$L$ and buyer~$M$,
such that~$B_M\ge S_L$ and $B_{M+1}<S_{L+1}$.
We find seller~$L$ and buyer~$M$ and
the double auction dictates that the first $L-1$
and $M-1$ buyers will participate in the energy trading.
We don't consider seller~$L$
and buyer~$M$ in this trade since it's a necessary condition
for matching the total supply and demand
while maintaining a strategy-proof
mechanism~\cite{doubleauction2}.

Thus all sellers with index~$n<L$
and all buyers with index~$k<M$
become the participants
in the double auction so as to clear
the market.
In consequence, the trading prices
for the sellers and the buyers can be
chosen within any point in the range
$[S_L,B_M]$ \cite{doubleauction1}.
In this work, for any
strategy choice~${\bf A}$ (or $\textbf{A}_n, \forall n\in \mathcal{N}$)
by the sellers (i.e. aggregators), given seller~$L$
and buyer~$M$
at the intersection,
we consider all sellers~$i<L$
and buyers~$k<M$
trade at a price
$P({\bf A})$, given by
\begin{equation}
P({\bf A})=\frac{S_L(\bv{A})+B_M(\bv{A})}{2},
\end{equation}
where the dependence on ${\bf A}$
is due to the fact that
each~${\bf A}$ can give different interaction points with the demand and supply curves, and thus we can have
 different~$L$
and~$M$.

At the end of the auction,
numerous criteria can be used
for determining
the amount of traded energy
between each one of the
$L-1$ sellers and~$M-1$ buyers.
We adopt the approach
of~\cite{doubleauction2}
in which the volume
is divided in such a way
as to ensure
a strategy-proof auction.
Using the method in~\cite{doubleauction2}
the total quantity $Q_n({\bf A})$
sold by any~PHEV group~$n$,
for a given choice~${\bf A}$
is:
\begin{equation}
Q_n({\bf A})=
\left\{
\begin{array}{ll}
A_n & \text{if } \sum_{k=1}^{M-1} X_k\ge \sum_{j=1}^{n} A_j,\\
(A_n-\Psi)^+ & \text{if } n=L-1,\\
0 & \text{if } n> L-1,
\end{array}
\right.
\end{equation}
where $(x)^+=\max\{0,x\}$ and $\Psi$ is the oversupply, i.e., $\Psi=\sum_{j=1}^{L-1} A_j - \sum_{k=1}^{M-1} X_k$.

\subsection{Micro Layer}
In this layer,
each aggregator announces a price $p_n$ to its PHEVs by
\begin{equation}
p_n = \gamma_n P(\textbf{A}) ,\forall n\in \mathcal{N}, \label{eq:bid-price}
\end{equation}
where $0<\gamma_n<1$ is the commission rate. The price difference, $P(\textbf{A})-p_n$, is the commission (i.e., cost of management) earned by aggregator $n$ from its managed PHEVs. Each aggregator can make a profit in this way
and thus it has an incentive to help its PHEVS participate in the market.
This provides an incentive for an aggregator to maximize its PHEVs' profits.
By utility maximization, each PHEV $i\in\mathcal{I}_n$  determines its available supply~\mbox{$a_i\in[0,a_i^{\max}]$} according to its utility 
where   $a_i^{\max}$ characterizes the amount of energy the owner of PHEV $i$ can afford to sell, i.e., after reserving enough for its own use.
For the same type of vehicle, a PHEV which needs  more reservation for its proper operation  has a lower $a_i^{\max}$.
 We have
\begin{equation}
A_n = \sum_{i=1}^{I_n}{a_i}, \forall n\in \mathcal{N}. \label{aggreagated_supply}
\end{equation}
Let $\textbf{a}_n=(a_i,1\leq i\leq I_n)$ be the strategy of the PHEVs in group $i$.
Aggregator $n$ determines  the actual quantity $q_i(\textbf{a}_n,Q_n(\textbf{A}))$ allocated to PHEV $i$ in group $n$ proportional to $a_i$ by
\begin{equation}
q_i(\textbf{a}_n,Q_n(\textbf{A})) = \frac{a_i}{\sum_{i\in \mathcal{I}_n}{a_i}} \times Q_n(\textbf{A}),
\end{equation}
and thus
\begin{equation}
Q_n(\textbf{A}) = \sum_{i=1}^{I_n}{q_i(\textbf{a}_n,Q_n(\textbf{A}))}, \forall n\in \mathcal{N}.
\end{equation}



Consider that each PHEV $i\in\mathcal{I}_n$ imposes a cost of discharging its battery to supply energy to its aggregator. We denote this cost by $c_i(a)$, where $a$ is the amount of energy \emph{sold} by PHEV $i$. We assume that during each transaction, each PHEV will try to maximize its profit by choosing the amount of energy $a_i$ to be
\begin{align}
a_i &=  \arg\max_{a_i}\{p_n(\textbf{a}_n)-c_i(a_i)\}.  \nonumber
\end{align}




\subsection{Market Mechanism}
The interaction between the macro and micro layers can be seen as a market mechanism or an algorithm, shown in Algorithm~\ref{market_mechanism}, which is used to find the market equilibrium $P^*, A_n^*$. 
It terminates when a pre-determined number of iterations $t^{\max}$ has been reached or the percentage change of the market price is less than a certain threshold $\xi$.

\begin{algorithm}
\caption{Market Mechanism}
\label{market_mechanism}
\begin{algorithmic}[1]
\STATE For all $k$, buyer $k$ submits its bid $B_k$ and its requested amount $X_k$ to the auctioneer
\STATE $t \gets 1$
\REPEAT
\STATE For all $n$, aggregator $n$ submits its reservation price $S_n$ and its proposed supply~$A_n$ to the auctioneer
\STATE The auctioneer implements a double auction and determines the market price $P(t)$ and the allocated quantity $Q_n$ for aggregator $n$ for all $n$
\STATE For each aggregator $n$, the selling price $p_n$ is announced to its PHEVs
\STATE For each PHEV $i$, it determines its proposed selling amount $a_i$ by maximizing its utility according to  $p_n$ and returns $a_i$ back to its aggregator
\STATE Each aggregator $n$ sums up all $a_i$'s from its PHEVs
\STATE $t \gets t+1$
\UNTIL{$t> t^{\max}$ or $|\frac{P^{(t)}-P^{(t-1)}}{P^{(t)}}|<\xi$}
\end{algorithmic}
\end{algorithm}
As we can see in simulation later, Algorithm 1 performs very efficiently and converges within a few steps. 


\section{Linear Approximation for Large systems}\label{section:linear-approx}
In a practical smart grid, both the number of PHEVs and smart grid element are expected to be large. In this respect, it is of interest to analyze the impact of the presence of large numbers of buyers and aggregators on the overall proposed market mechanism. Therefore, here, we consider the situations when $N$ and $M$ are sufficiently large such that the corresponding supply and demand curves can be approximated by a linear  function, depending on the distribution of aggregators' reservation price (buyers' bids). We assume that the reservation prices $S_n$ from each one can be ordered nicely into a line with each point being one unit apart. Fig.~\ref{figurecontinuous} shows one example of the supply and demand curves, each of which is approximated by one single  linear function. In the following, since the focus of this paper is on the supply side, we study different types of supply curves while always keeping the demand curve of the same type.


\subsection{Linear Cost Function with Homogeneous PHEVs} \label{subsec:linearhomo}
In this subsection, we consider that the cost function of PHEV $i$ is a linear function
with respect to the amount of energy it provides, i.e., $c_i(a)=\eta_i a$,
within a certain interval~$[0,a_i^{\max}]$. If the amount of energy to be sold~exceeds $a_i^{\max}$,
the cost will become infinite. Note that this case also takes into account the inconvenience cost (e.g.,
when the PHEV does not have enough energy to operate normally).
Then its utility function
is given by 
$u_i(a)=(\gamma_n P-\eta_i) a$.
 The problem PHEV $i$ needs to address is the following:
\begin{equation}
u_i^*=\underset{a_i\in[0,a_i^{\max}]}{\text{maximize}}\, u_i(a_i). \label{utilitymax}
\end{equation}
We note that, a PHEV can always decide not to participate in the market in which case it maximum utility will be~$u_i^*=0$. As a result, the maximum utility in (\ref{utilitymax}) is always nonnegative. In this case, the solution $a_i^*$ of (\ref{utilitymax}) is~
\begin{eqnarray}
a_i^*=\left\{\begin{array}{cl}
a_i^{\max} & \text{if}\,\,\,\eta_i\geq p_n,\\
0 & \text{else}.
\end{array}\right.
\end{eqnarray}
%

We can then subdivide the set of PHEVs~$\mathcal{I}_n$ in two disjoint sets~$\mathcal{I}_n^{(1)}$ and $\mathcal{I}_n^{(2)}$,
such that $\mathcal{I}_n^{(1)}\cup \mathcal{I}_n^{(2)}=\mathcal{I}_n$, where
\mbox{$a_i^*=0, \forall i\in \mathcal{I}_n^{(1)}$}, and  $a_i^*=a_i^{\max}, \forall i\in\mathcal{I}_n^{(2)}$.
Thus, we obtain
\begin{equation}
A_n=\sum_{i\in \mathcal{I}_n} a_i=\sum_{i\in \mathcal{I}_n^{(1)}} a_i+\sum_{i\in \mathcal{I}_n^{(2)}} a_i.
\end{equation}
Since for all $i\in \mathcal{I}_n^{(1)}$, $a_i^*=0$, then $\sum_{i\in \mathcal{I}_n^{(1)}} a_i=0$,
and in consequence
\begin{equation}\label{suma}
A_n=\sum_{i\in \mathcal{I}_n^{(2)}} a_i=\sum_{i\in \mathcal{I}_n^{(2)}} a_i^{\max}. \end{equation}

\subsubsection{Linear Approximation}

 \begin{figure}[!t]
 \centering
 \vspace{-.2in}
 \subfigure[Real situation]{\includegraphics[width=1.7in]{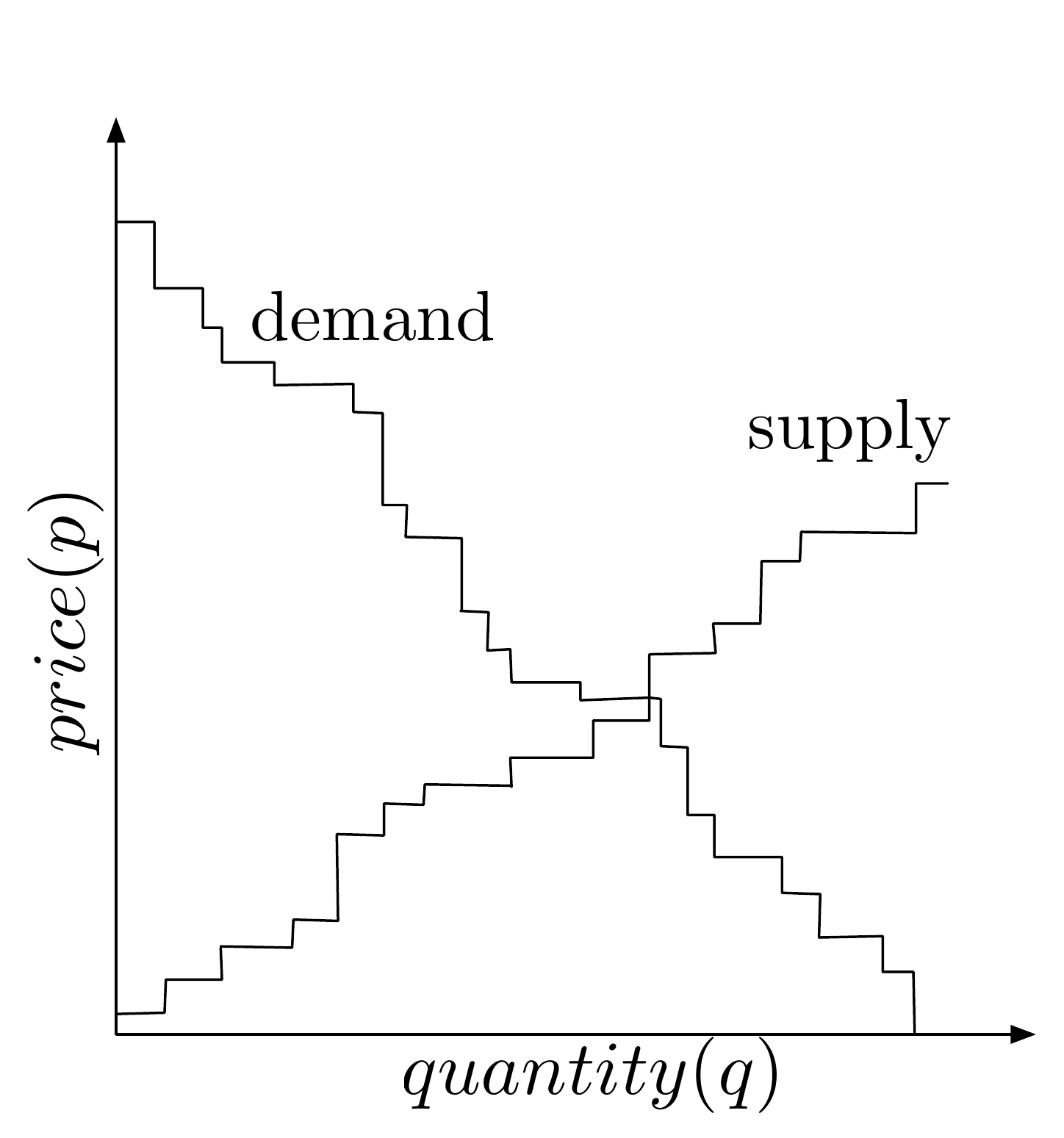}%
 \label{fig:heterogenous}}
 \hfil
 \centering
 \subfigure[Linear approximation]{\includegraphics[width=1.7in]{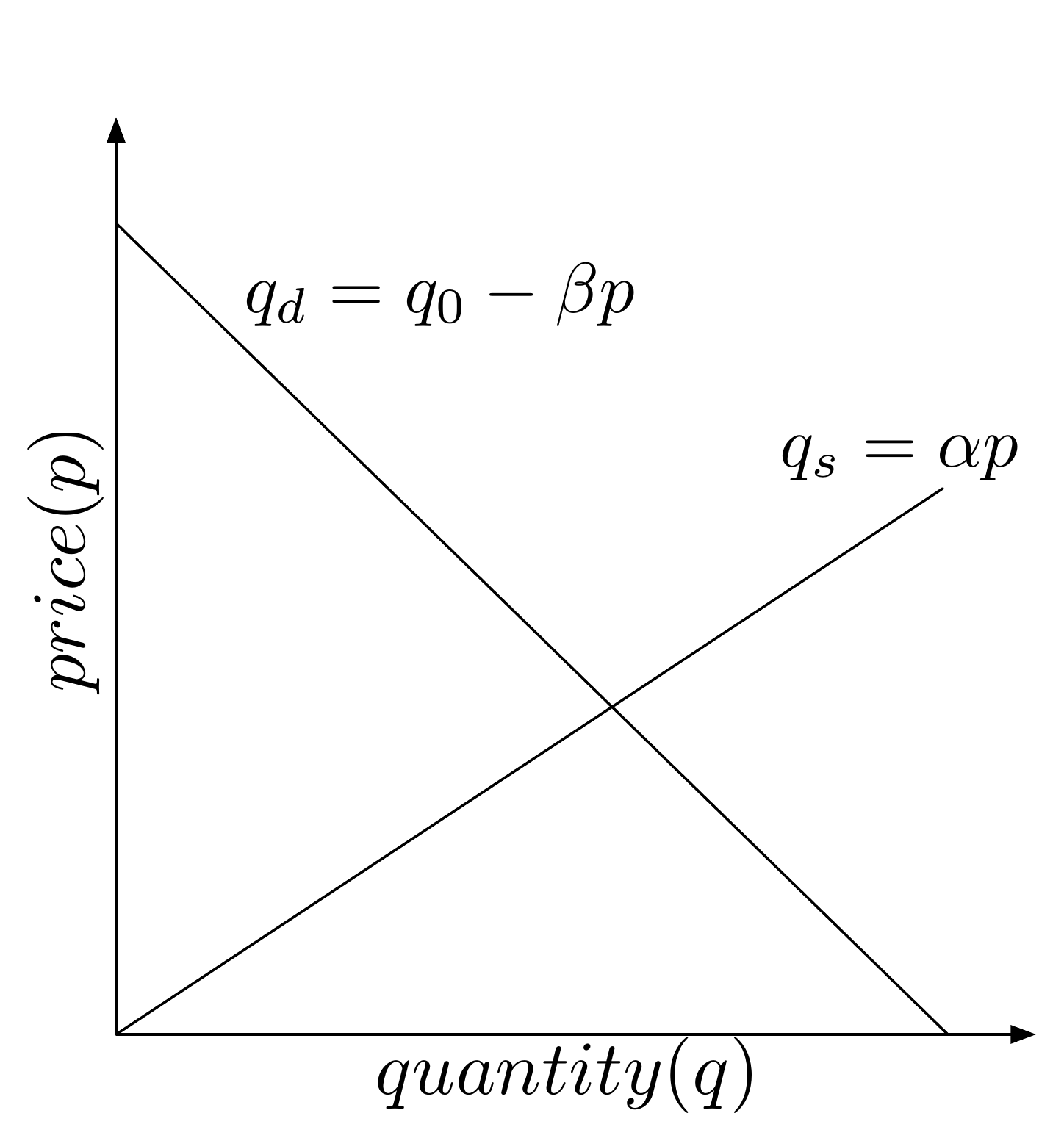}%
 \label{fig:linear}}
 \caption{Supply and demand curves}
 \label{figurecontinuous}
 \end{figure}

Consider the linear case as in Fig.~\ref{fig:linear}.
The PHEVs are price takers and
the supply and demand curves are given by
\begin{eqnarray}
\textrm{Supply}(P,Q)&=&\alpha P,\\
\textrm{Demand}(P,Q)&=&Q_0-\beta P.
\end{eqnarray}
Here~$Q_0$ is the total demand from all the buyers and~$P$ is the price determined from the double auction in the macro layer. Also, $\alpha$ will depend on $\mathcal{I}_n^{(2)}$  for all aggregator $n$. 
In that case, the equilibrium price $P^*$ and the quantity sold $Q^*$ can be determined when the supply meets the demand, i.e.,
\begin{align*}
 \textrm{Supply}(P^*,Q^*)=\textrm{Demand}(P^*,Q^*) \Rightarrow \alpha P^*=Q_0-\beta P^*,
\end{align*}
or equivalently, when
\begin{equation}
P^*=\frac{Q_0}{\alpha+\beta}. \label{eq:equ_price}
\end{equation}
%
%
The utility for each PHEV~$i$ in group $n$ will be thus given by
\begin{enumerate}
\item If $\eta_i\ge\frac{\gamma_n Q_0}{(\alpha+\beta)}$ then $a_i^*=0$ and $u_i^*=0$.
\item If $\eta_i<\frac{\gamma_n Q_0}{(\alpha+\beta)}$ then $a_i^*=a_i^{\max}$ and
\begin{equation}
u_i^*=\left(\gamma_n P^*-\eta_i\right)a_i^* = \left(\frac{\gamma_n Q_0}{(\alpha+\beta)}-\eta_i\right)a_i^{\max}.
\end{equation}
The total utility $U_n$ of aggregator $n$ is given by
\begin{align}
U_n=\sum_{i\in\mathcal{I}_n^{(2)}}{\left(\frac{\gamma_n Q_0}{(\alpha+\beta)}-\eta_i\right)a_i^{\max}}.
\end{align}
\end{enumerate}

From this simple example, we can deduce the following properties:
\begin{enumerate}
\item If the supply slope of a market~$1$ is higher than that of a market~$2$,
i.e., $\alpha_1>\alpha_2$, then the utility gained in market~$1$ is smaller than that in market~$2$.
\item If the demand slope of a market~$1$ is higher than that of a market~$2$,
i.e., $\beta_1>\beta_2$, then the utility gained in market~$1$ is smaller than that in market~$2$.
\item If the total possible demand of a market~$1$ is higher than that of a market~$2$,
i.e., $Q_0^1>Q_0^2$, then the utility gained in market~$1$ is greater than that in market~$2$.
\end{enumerate}

\subsection{Quadratic Cost Function}

We first recall that under the quadratic cost model, each PHEV $i$ will supply the following amount of energy:
\begin{eqnarray}
a_i^*  =  \bigg[\frac{p_n(\textbf{a}_n)-\eta_i}{2\upsilon_i}\bigg]_0^{a_i^{\max}}. 
\end{eqnarray}
In order to make the analysis easier, below we 
assume that $a_i^*\leq a_i^{\max}$ for all $i$. Then,  we have:
\begin{eqnarray}
a_i^*  =  \frac{p_n(\textbf{a}_n)-\eta_i}{2\upsilon_i}, \quad\text{and}\quad A_n^*=C_n p_n(\textbf{a}_n)-d_n,  \label{eq:approx-demand}
\end{eqnarray}
where 
\begin{equation}
C_n= \sum_i\frac{1}{2\upsilon_i}\quad
\textrm{and}\quad
d_n= \sum_i\frac{\eta_i}{2\upsilon_i}.
\end{equation}
In this case, $A_n^*$ is the supply from one aggregator and  $A_n^*$ is the slope of the supply curve, i.e., $\alpha=A_n^*$.
Thus,  we have by (\ref{eq:equ_price}) that:
\begin{eqnarray}
P^* = \frac{Q_0}{A^*_n+\beta}. \label{eq:price-supply}
\end{eqnarray}
Hence, we see that there is an iteration process of the market price, which approximates the actual price iteration:
\begin{itemize}
\item In iteration $t$, with $A_n(t)$ computed, we obtain the optimal price $P^*(t)=\frac{Q_0}{A_n(t)+\beta}$.
\item In iteration $t+1$, we compute the new supply quantity by (\ref{eq:approx-demand}), i.e., $A_n(t+1) = C_n\gamma_n P^*(t)-d_n$.
\end{itemize}
Based on the above observation, we have the following simple lemma characterizing the  necessary conditions under which there exists market equilibriums, i.e., (\ref{eq:approx-demand}) and (\ref{eq:price-supply}) both hold.

\begin{lemma} 
The following conditions are necessary for the above iteration process converges to an  market equilibrium ($P^*, A_n^*$):
\begin{eqnarray}
(\beta+d_n)^2-4(d_n\beta-C_nQ_0)&\geq& 0, \forall n\in \mathcal{N}\label{eq:Acond}\\
(\beta-d_n)^2+4C_nQ_0&\geq& 0, \forall n\in \mathcal{N}.\label{eq:Pcond}\,\,\Diamond
\end{eqnarray}
\end{lemma}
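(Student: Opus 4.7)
The plan is to view the two stated fixed-point conditions $A_n^* = C_n p_n - d_n$ (equation (\ref{eq:approx-demand})) and $P^* = Q_0/(A_n^* + \beta)$ (equation (\ref{eq:price-supply})) as a simultaneous system in the two unknowns $P^*$ and $A_n^*$ (for a fixed $n$), and to reduce each to a single quadratic in one variable. The necessity of the two inequalities in the lemma will then come directly from requiring these quadratics to have real roots, since if no real root exists, the iteration cannot possibly settle at a real-valued pair $(P^*,A_n^*)$.

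First I would eliminate $A_n^*$ from the system. Substituting the supply relation $A_n^* = C_n P^* - d_n$ (absorbing $\gamma_n$ into the effective price, as the paper implicitly does after (\ref{eq:approx-demand})) into $P^*(A_n^* + \beta) = Q_0$ yields
\begin{equation}
C_n (P^*)^2 + (\beta - d_n) P^* - Q_0 = 0.
\end{equation}
A real solution for $P^*$ exists if and only if the discriminant is nonnegative, giving $(\beta - d_n)^2 + 4 C_n Q_0 \geq 0$, which is precisely (\ref{eq:Pcond}). Next, I would eliminate $P^*$ instead: using $P^* = Q_0/(A_n^*+\beta)$ in $A_n^* + d_n = C_n P^*$ leads to $(A_n^* + d_n)(A_n^* + \beta) = C_n Q_0$, i.e.,
\begin{equation}
(A_n^*)^2 + (\beta + d_n) A_n^* + (d_n\beta - C_n Q_0) = 0.
\end{equation}
Demanding a real root of this quadratic gives $(\beta + d_n)^2 - 4(d_n\beta - C_n Q_0) \geq 0$, which is (\ref{eq:Acond}). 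Since each aggregator $n$ must individually produce a real equilibrium pair, both inequalities must hold for every $n \in \mathcal{N}$.

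I do not expect a serious obstacle here: the argument is a two-line elimination followed by a discriminant check, and the lemma asks only for necessity, so there is no need to verify convergence of the iterative dynamics or to pick out the economically meaningful root (positivity, stability, or contraction of the update map could be discussed separately but are not claimed). The only mild subtlety worth flagging in the write-up is the status of the commission factor $\gamma_n$: since (\ref{eq:approx-demand}) is written in terms of $p_n = \gamma_n P$ while the stated conditions involve $P^*$ directly, I would note explicitly that $C_n$ in the lemma should be read as the effective slope $C_n \gamma_n$ coming from the micro-layer pass-through, so that the two quadratics above match the coefficients in (\ref{eq:Acond})--(\ref{eq:Pcond}) as written.
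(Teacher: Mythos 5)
Your proof is correct and is essentially the paper's own argument: the paper likewise reduces the two fixed-point relations to the same quadratic in $A_n$ and the same quadratic in $P^*$, and obtains (\ref{eq:Acond}) and (\ref{eq:Pcond}) as the nonnegativity of their discriminants. Your remark about reading $C_n$ as the effective slope $C_n\gamma_n$ is a fair observation, since the paper's derived quadratics carry the factor $\gamma_n$ while the stated conditions omit it.
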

\begin{proof}
In equilibrium, this price must result in a supply that is exactly equal to the resulting $A_n$, i.e.,
\begin{eqnarray}
A_n = \frac{C_nQ_0\gamma_n}{A_n+\beta} - d_n.
\end{eqnarray}
This gives rise to the following condition on $A_n$:
\begin{eqnarray}
 A_n^2 +(\beta+d_n)A_n+\beta d_n-C_n\gamma_nQ_0=0. \label{Q_condition}
\end{eqnarray}
%
%
Note that the above argument also shows an interesting fact that there is an implicit iteration for the market price $P$ as follows:
\begin{eqnarray}
P^*  &=& \frac{Q_0}{ C_n\gamma_nP^*-d_n +\beta}.
\end{eqnarray}
This similarly implies that the fixed point should be:
\begin{eqnarray}
  C_n \gamma_nP^{*2} + (\beta -d_n)P^* -Q_0=0. \label{P_condition}
\end{eqnarray}
Since both \eqref{Q_condition} and \eqref{P_condition} are quadratic functions, we see that the only way there exists a market equilibrium is when  \eqref{eq:Acond} and \eqref{eq:Pcond} hold.
%
%
\end{proof}

In the next section, we will show how the algorithm evolves through simulation and demonstrate the intuition behind the iteration process.


\vspace{-.1in}
\section{Simulation Results}\label{section:simulation}
For simulations, we consider a smart grid network in which a number of aggregators sell their energy surplus to smart grid elements (buyers) through a utility company. The simulation setting  is as follows. Each aggregator manages a certain number of PHEVs, randomly generated in the range of $[500,1000]$. Each PHEV has a maximum battery capacity of 250 miles with power consumption 22kWh per 100 miles \cite{car1,car2} out of which an arbitrary amount of energy, between $30$ and $100$ miles, is reserved for the PHEV's private use. The reservation prices of the aggregators are uniformly selected in $[10,50]$ dollars/MWh while the buyers' bids are randomly chosen from $[15,60]$ dollars/MWh. Each buyer requests energy demand with the amount chosen in $[20,60]$ MWh. The commission rate is set to $\gamma_n = 0.91, \forall n\in \mathcal{N}$. Each PHEV $i$ has random cost function parameters $\eta_i\in[10,50]$ and $\upsilon_i\in [1000,2000]$ for quadratic cost function and $\beta_i =0$ for the linear cost function. The algorithm always starts by setting $a_i=a^{\max}_i,\forall i\in \mathcal{I}_n, \forall n\in \mathcal{N}$.

 \begin{figure}[!t]
 \centering
  \vspace{-.05in}
 \subfigure[Total utility and price for linear cost]{\includegraphics[width=3.4in]{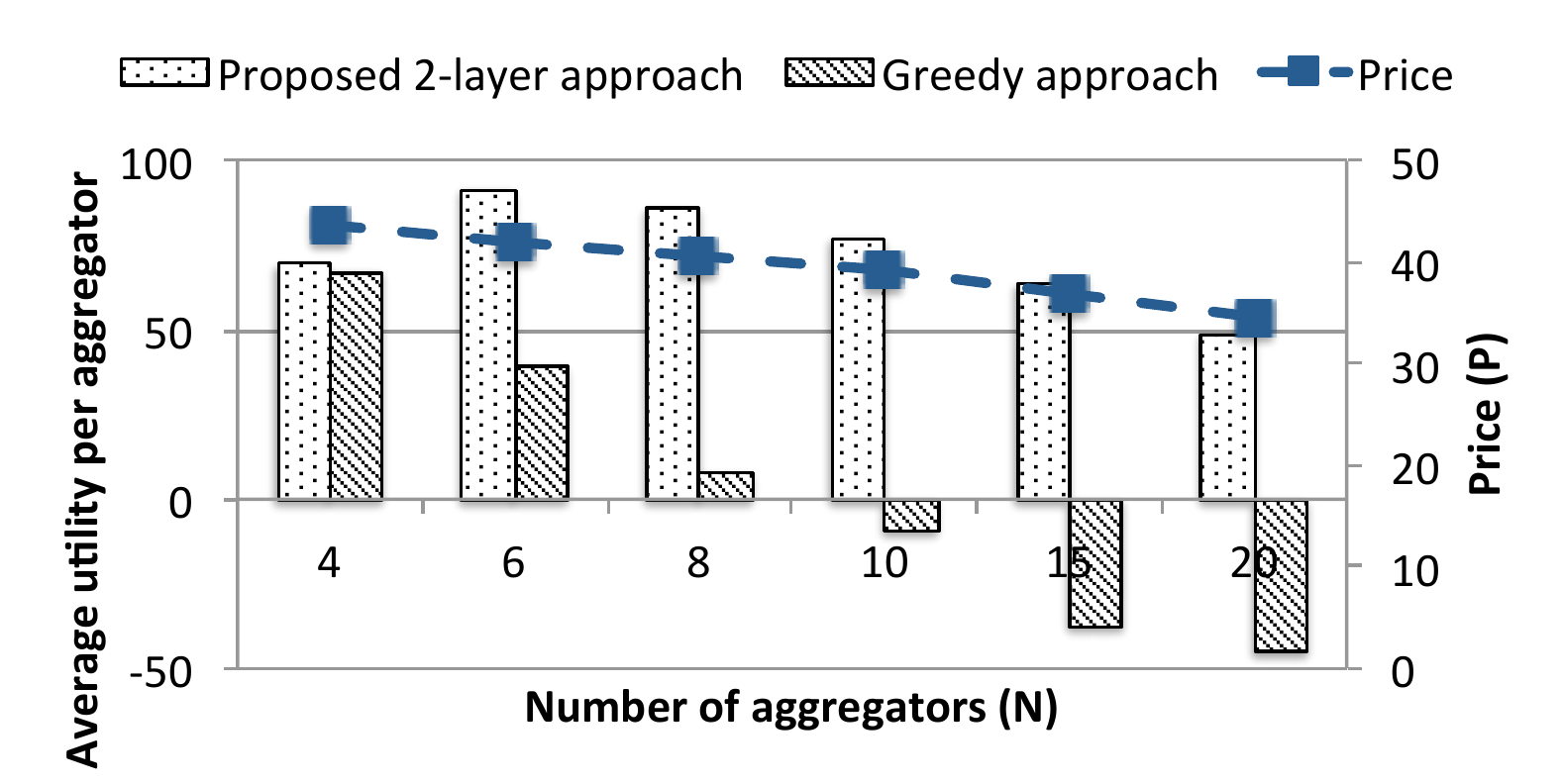}%
 \label{fig:linear_1000}}
 \vspace{-.1in}
 \centering
 \vspace{-.05in}
 \subfigure[Total utility and price for quadratic cost]{\includegraphics[width=3.4in]{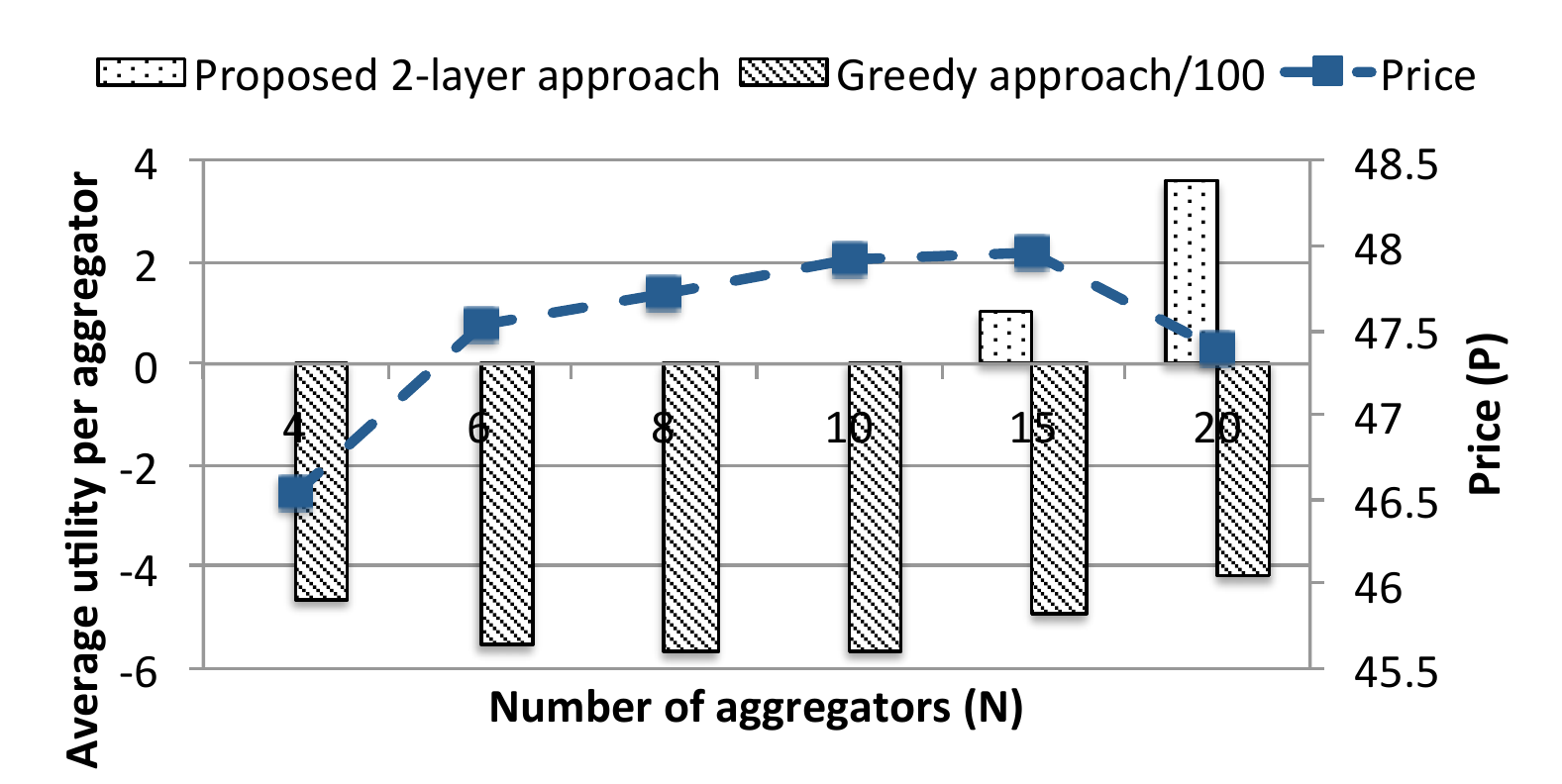}%
 \label{fig:linear_quadratic_1000}}
 \vspace{-.1in}
 \centering
  \vspace{-.05in}
 \subfigure[Iterations required for convergence]{\includegraphics[width=3.4in]{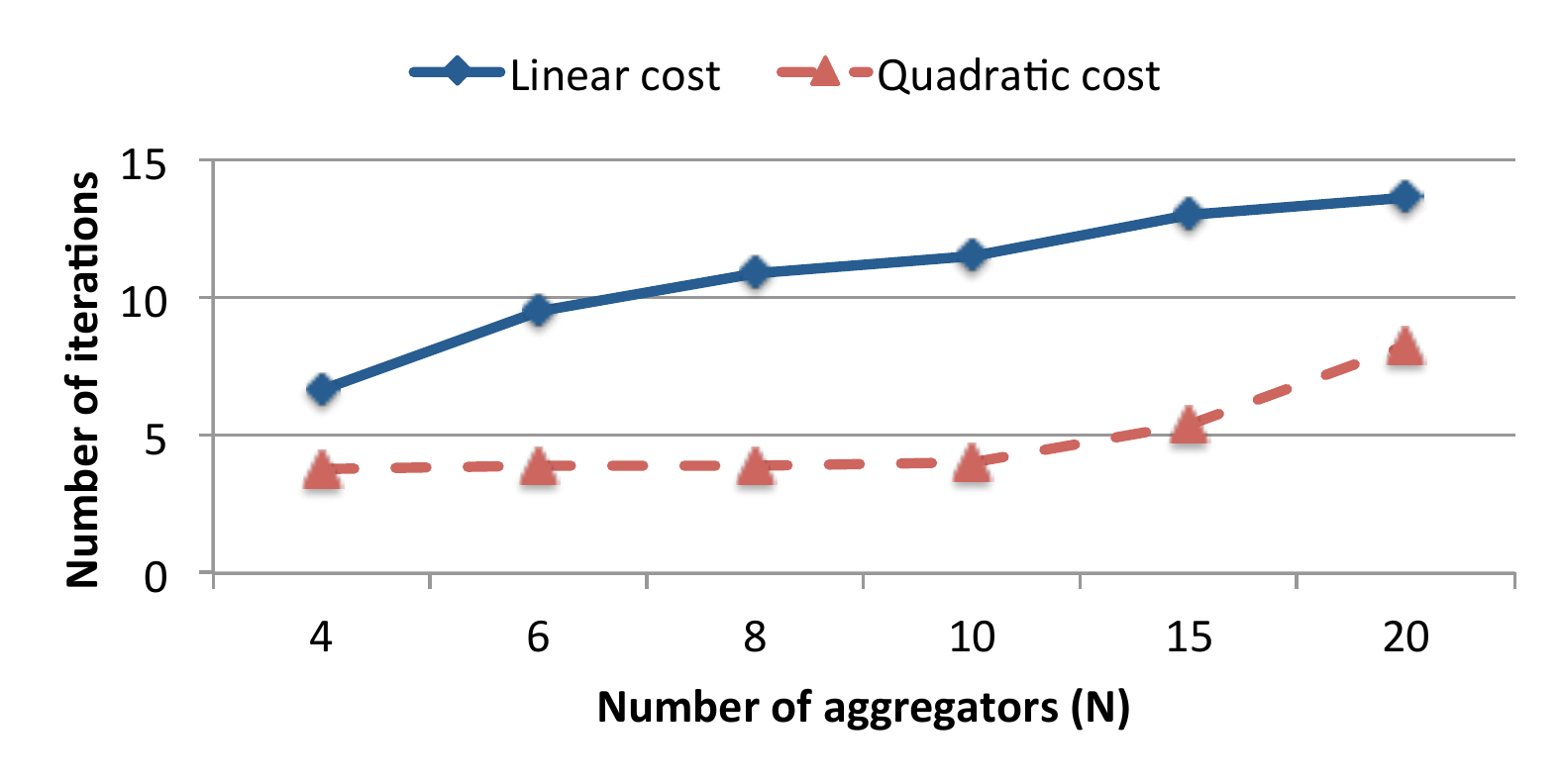}%
 \label{fig:iterations}}
 \caption{Basic simulations for small numbers of buyers and aggregators.}
 \label{fig:sim1000_basic}
     \vspace{-.2in}
 \end{figure}

\subsection{Small Numbers of Buyers and Aggregators}
We simulate the cases with small numbers of buyers and aggregators. We consider 5 buyers ($K=5$) in each case and we compare our two-layer approach with a greedy approach, in which each PHEV always proposes to sell $a_i^{\max}$. Fig. \ref{fig:sim1000_basic} shows the average results of 1000 independent simulation runs for each case. Figs.~\ref{fig:linear_1000} and \ref{fig:linear_quadratic_1000} present the average utility per aggregator corresponding to the linear and quadratic cost functions, respectively. We can see that our approach always gives higher average utility than the greedy one, as shown in Figs.~\ref{fig:linear_1000} and \ref{fig:linear_quadratic_1000}. In the cases of linear cost, we can see that the average utility start by increasing with $N$ but then, it starts to decrease when the number of aggregators reaches that of buyers ($N=6$). This result is due to the fact that, for small $N$, an increase in the number of participating aggregators leads to a larger amount of energy sold which, subsequently, improves the average utility. However, when $N\geq 6$, an increase in the number of aggregators $N$ will yield a decrease in the settled price which leads to a decrease in the utility. In the cases with quadratic cost, the average utility increases with $N$ on the grounds that more aggregators participate in the market resulting in a larger amount of total energy sold. Hence, in general, the equilibrium trading price decreases with $N$ as more aggregators lead to an increased competition, which subsequently imposes a lower market price.

Fig.~\ref{fig:iterations} shows the average number of iterations required to reach an equilibrium. Clearly, as more aggregators participate in the market, the number of iterations till convergence increases. Moreover, Fig.~\ref{fig:iterations} shows that the convergence time is faster in the case with quadratic cost. With a linear cost function, each PHEV $i$ takes either 0 or $a_i^{\max}$ in each iteration, and thus, the algorithm is more like to oscillate more around the equilibrium point before convergence. With a quadratic cost function, due to the concavity of the utility, the algorithm moves toward the equilibrium in a smoother manner which is further corroborated in the subsequent simulations.

\subsection{Large Numbers of Buyers and Aggregators}
Here, we simulate cases in which a large number of buyers ($K=1000$) and aggregators ($N=1000$) are deployed so as to verify the analytical results induced from the linear approximation studied in Section \ref{subsec:linearhomo}. As previously mentioned, when $N$ ($K$) increases, the supply (demand) curve approaches a linear function as all random numbers are generated uniformly.
 \begin{figure}[!t]
 \centering
 \subfigure[Total utility in each iteration]{\includegraphics[width=1.7in]{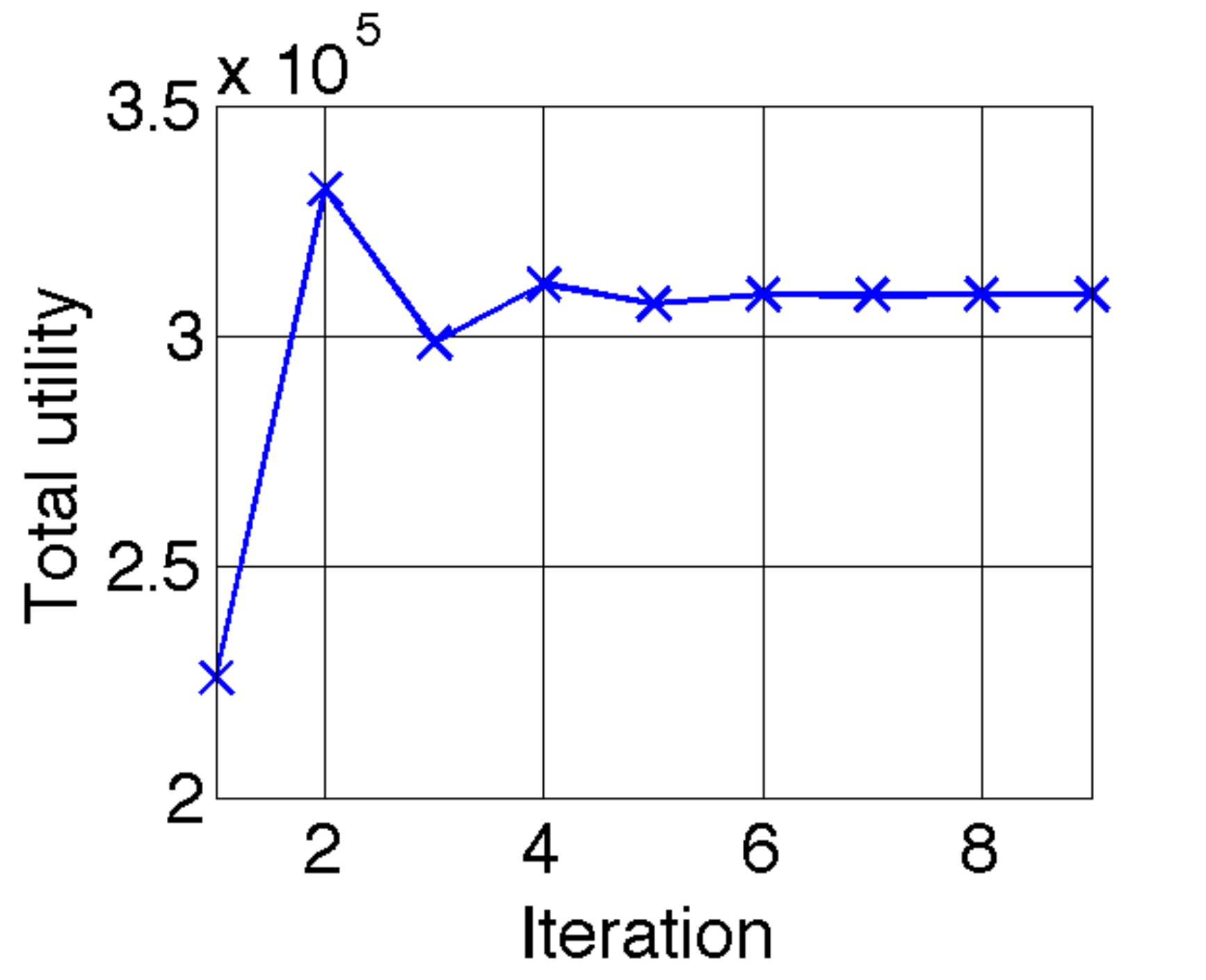}%
 \label{fig:linear_sim_1000_iter}}
 \hfil
 \centering
 \subfigure[Supply and demand curves]{\includegraphics[width=1.7in]{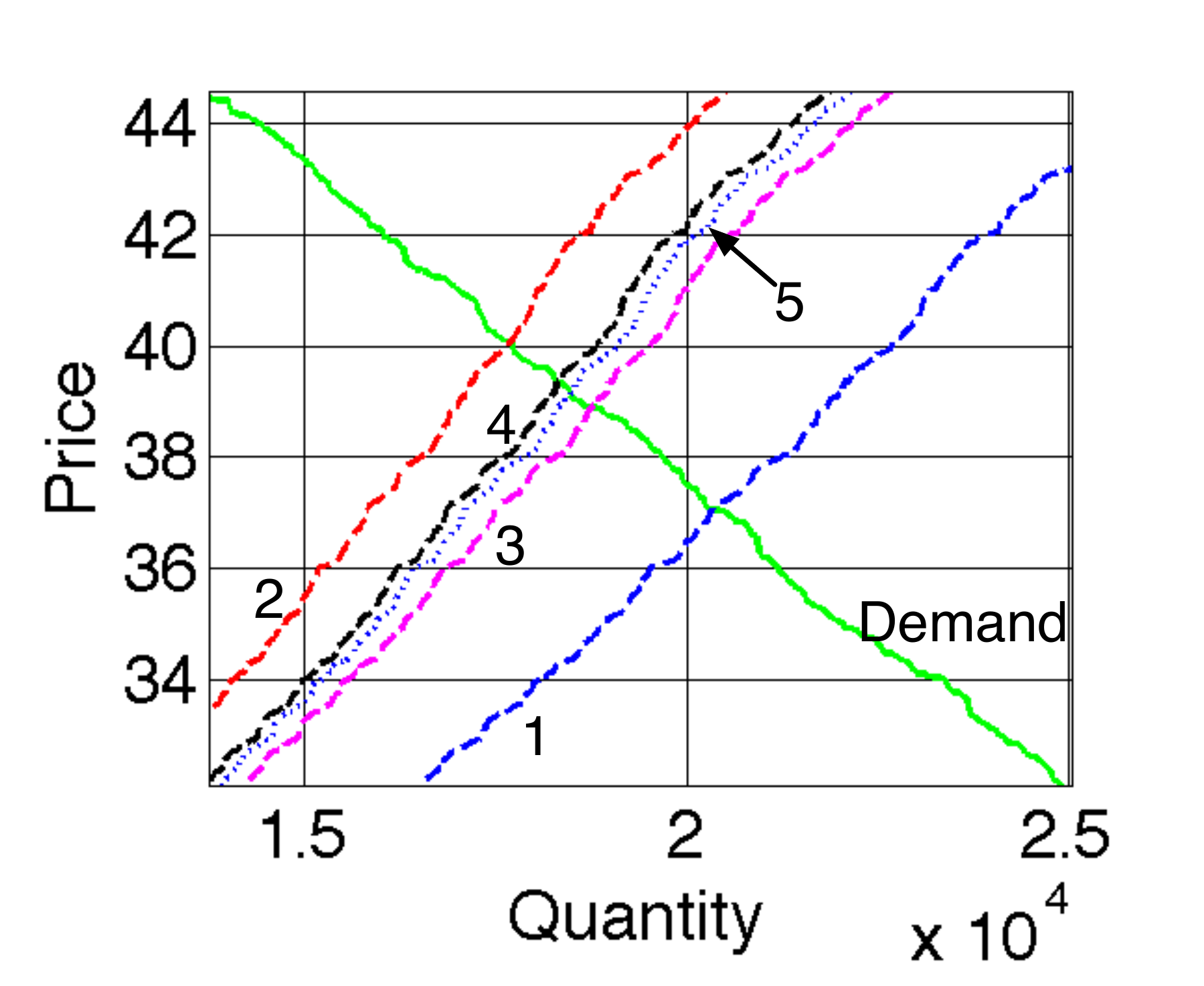}%
 \label{fig:linear_sim_1000_sd}}
 \caption{Linear cost function with 1000 buyers (K) and 1000 aggregators (N)}
 \label{fig:sim1000_linear}
    \vspace{-.2in}
 \end{figure}
 \begin{figure}[!t]
 \centering
 \subfigure[Total utility in each iteration]{\includegraphics[width=1.7in]{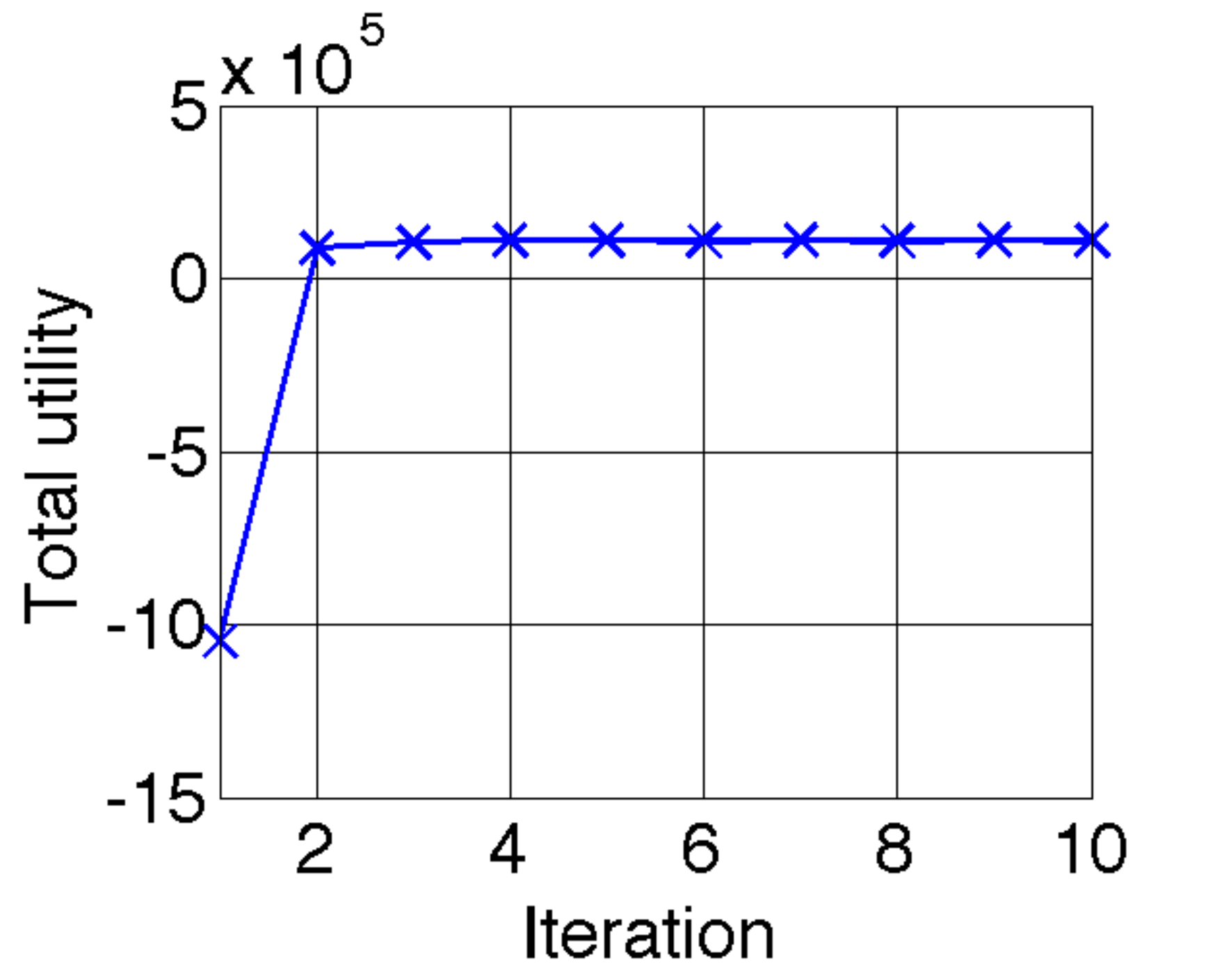}%
 \label{fig:quadratic_sim_1000_iter}}
 \hfil
 \centering
 \subfigure[Supply and demand curves]{\includegraphics[width=1.7in]{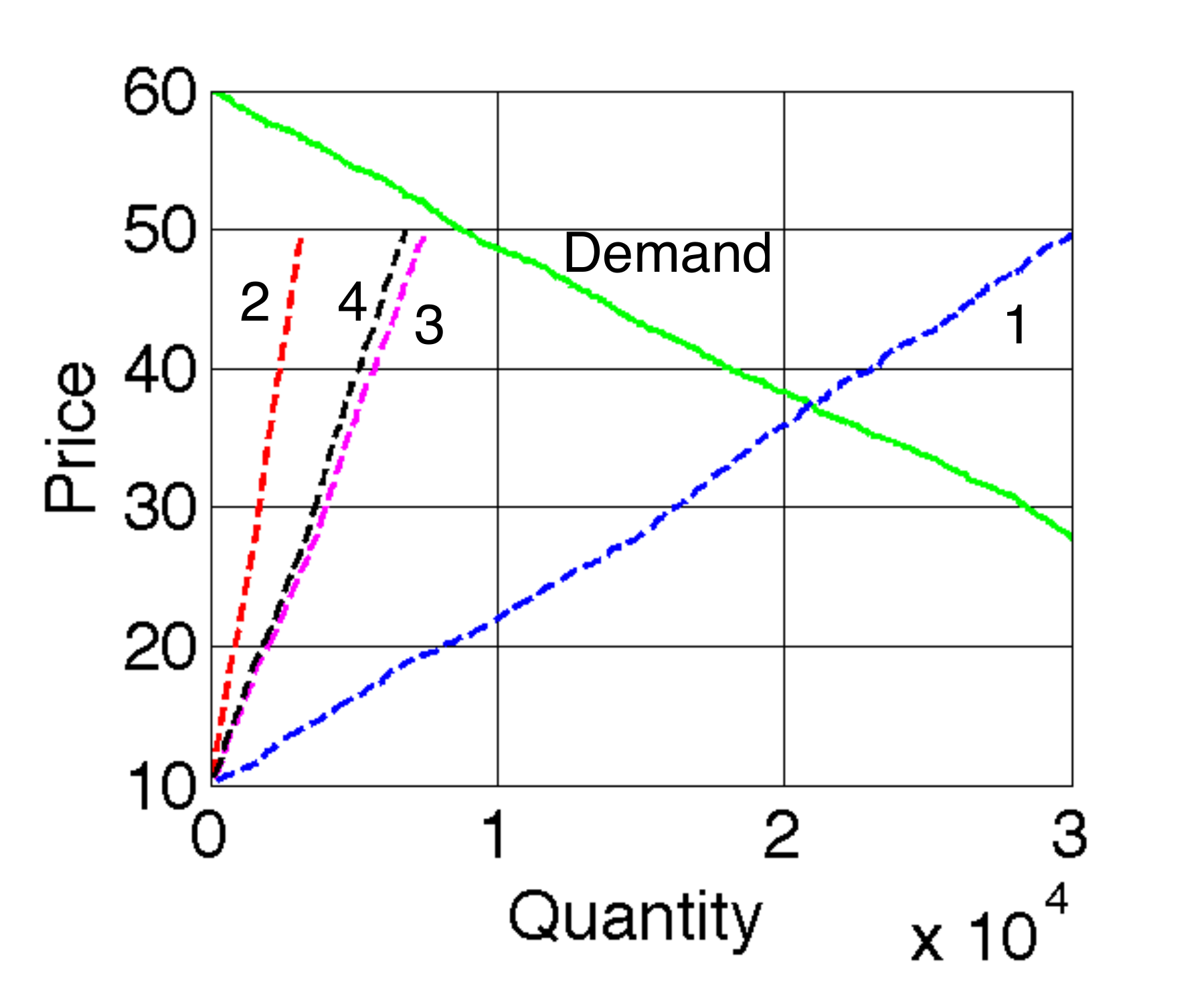}%
 \label{fig:quadratic_sim_1000_sd}}
 \caption{Quadratic cost function with 1000 buyers (K) and 1000 aggregators (N)}
 \label{fig:sim1000_quadratic}
     \vspace{-.2in}
 \end{figure}
We first study the results with a fixed demand curve (i.e. with $\beta$ and $Q_0$ fixed) and they correspond to the situations when the value evolves in a particular simulation.
Fig. \ref{fig:sim1000_linear}\footnote{The numbers in Subfigure (b) correspond to iteration/case numbers in Subfigure (a). For clearer demonstration, In Subfigure (b), only the curves corresponding to the first few iterations/cases are given.} shows the results for linear PHEV cost functions when the algorithm iterates. Fig. \ref{fig:linear_sim_1000_iter} gives the total utility computed from double auction for each iteration while we have the corresponding supply and demand curves in Fig. \ref{fig:linear_sim_1000_sd} ( the part shows intersection points only). We can see that the total utility decreases with the supply slope ($\alpha$). For example, we consider iterations 1 and 2. $\alpha_1$ is larger than $\alpha_2$ and we have the utility gained in iteration 1 is smaller than that in iteration 2.
We also study the quadratic PHEV cost functions and the similar results are shown in Fig.~\ref{fig:sim1000_quadratic}.\footnotemark[\value{footnote}] However, with the quadratic cost function, the algorithm converges faster and smoother.
 \begin{figure}[!t]
 \centering
 \subfigure[Total Utility in each case]{\includegraphics[width=1.7in]{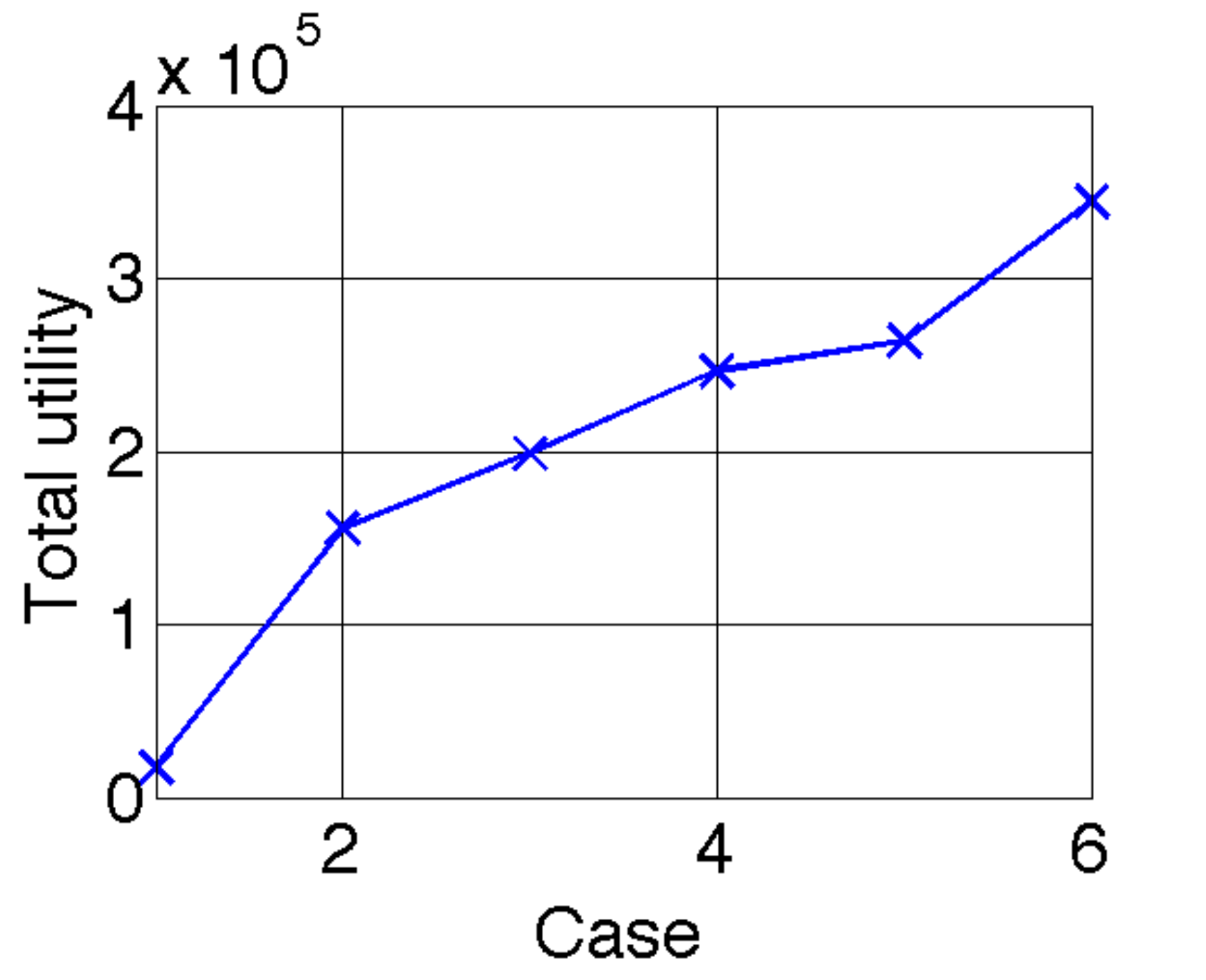}%
 \label{fig:sim_1000_fixedQ0}}
 \hfil
 \centering
 \subfigure[Supply and demand curves]{\includegraphics[width=1.7in]{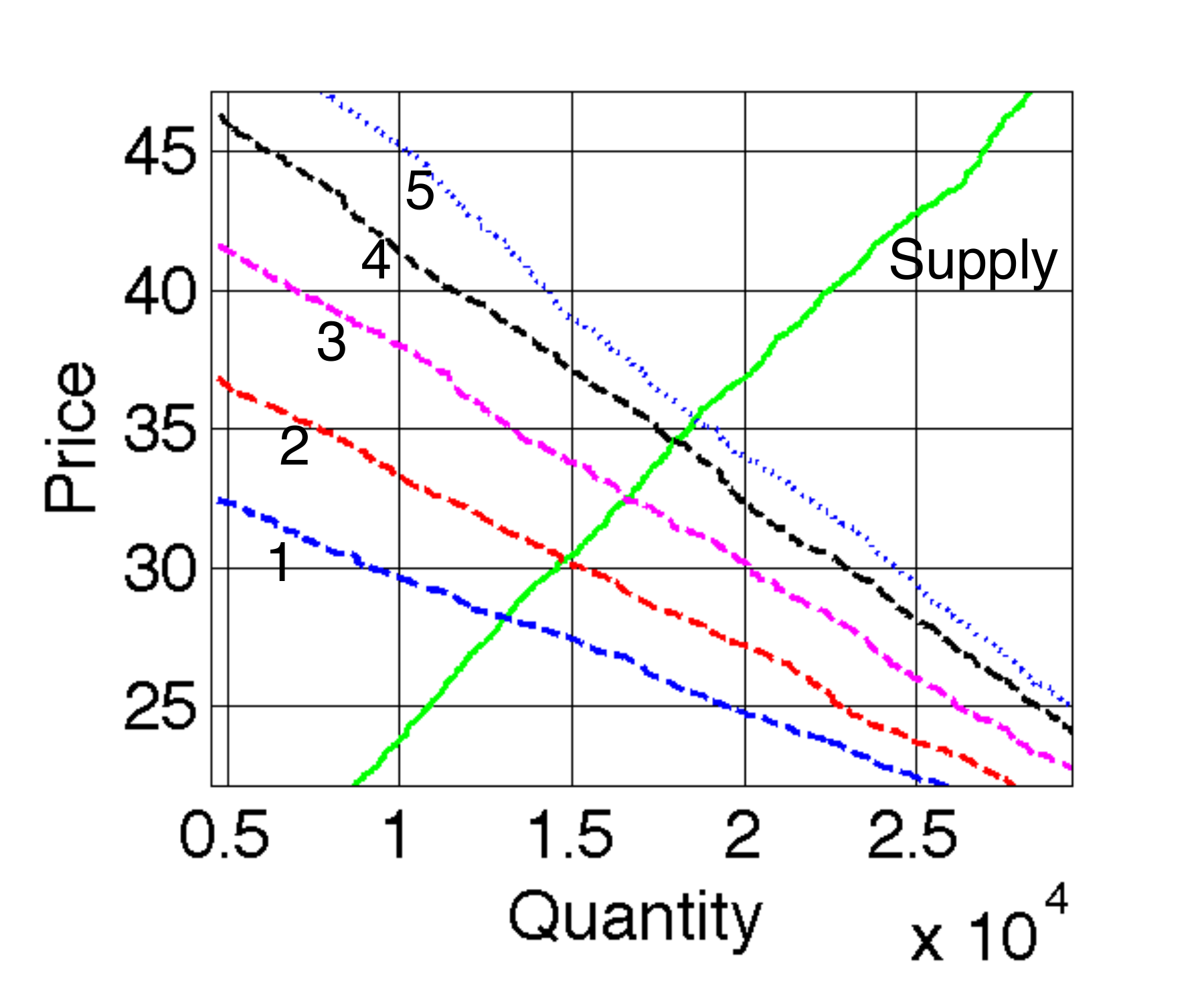}%
 \label{fig:sim_1000_fixedQ0_sd}}
 \caption{Fixed $Q_0$ with 1000 buyers (K) and 1000 aggregators (N)}
 \label{fig:sim1000_fixedQ0}
     \vspace{-.2in}
 \end{figure}
Next we study the results for six different demand curves with a fixed supply curve (i.e. $\alpha$ fixed). Fig.~\ref{fig:sim1000_fixedQ0}\footnotemark[\value{footnote}] shows the six cases with a fixed $Q_0$. The simulation also aligns the results deduced in Section \ref{subsec:linearhomo}: the larger $\beta$, the smaller the utility.

\section{Conclusion}\label{section:conclusion}
We have studied the V2G energy trading market, in which PHEVs sell their excessive energy to some loads in the distribution network. 
We have proposed a multi-layer system to coordinate the trading where those PHEVs in a close geographical area are represented by an aggregator. 
In the macro layer, the energy buyers and the aggregators engage in a double auction to determine the trading price 
and the actual energy quantity sold for each aggregator. In the micro layer, each aggregator helps its managed PHEVs maximize their utilities. 
A mechanism is proposed to coordinate the interaction between the macro and micro layers. 
We have analyzed the system performance when the numbers of buyers and aggregators are large. 
Simulation results show that our mechanism is always better than the greedy approach and verify some of our analytical results.



\bibliographystyle{hieeetr}
\bibliography{mybib}

\begin{thebibliography}{10}

\bibitem{evnum}
T.~A. Becker and I.~Sidhu, ``Electric vehicles in the united states: A new
  model with forecasts to 2030,'' Tech. Rep. 2009.1.v.2.0, Center for
  Entrepreneurship \& Technology, University of California, Berkeley, Aug.
  2009.

\bibitem{PHEV07}
A.~N. Brooks and S.~H. Thesen, ``{PG\&E and Tesla} motors: Vehicle to grid
  demonstration and evaluation program,'' in {\em Proc. 23rd International
  Electric Vehicles Symposium and Exposition}, (Anaheim, CA, USA), Dec. 2007.

\bibitem{UC01}
{Electric Power Research Institute (EPRI)}, ``{PAP11}- electric vehicle roaming
  scenarios,'' {\em EPRI Research Program, Use Case}.

\bibitem{UC04}
{National Institute for Standards and Technology (NIST)}, ``{SGIP CoS: SAE
  J2836} use cases for communication between plug-in vehicles and the utility
  grid,'' {\em NIST Standards}.

\bibitem{PHEV09}
C.~Guille and G.~Gross, ``A conceptual framework for the vehicle-to-grid
  {(V2G)} implementation,'' {\em Energy Policy}, vol.~37, pp.~4379--4390, Nov.
  2009.

\bibitem{PHEV10}
B.~K. Sovacool and R.~F. Hirsh, ``Beyond batteries: An examination of the
  benefits and barriers to plug-in hybrid electric vehicles {(PHEVs)} and a
  vehicle-to-grid {(V2G)} transition,'' {\em Energy Policy}, vol.~37,
  pp.~1095--1103, Mar. 2009.

\bibitem{v2g-islanded10}
J.~Pillai and B.~Bak-Jensen, ``Vehicle-to-grid for islanded power system
  operation in bornholm,'' in {\em Power and Energy Society General Meeting,
  2010 IEEE}, pp.~1 --8, july 2010.

\bibitem{Hamed}
C.~Wu, H.~Mohsenian-rad, and J.~Huang, ``Vehicle-to-aggregator interaction
  game,'' {\em IEEE Trans. Smart Grid}, to appear 2011.

\bibitem{PHEV-managing11}
M.~Erol-Kantarci and H.~Mouftah, ``Management of phev batteries in the smart
  grid: Towards a cyber-physical power infrastructure,'' in {\em Wireless
  Communications and Mobile Computing Conference (IWCMC), 2011 7th
  International}, pp.~795 --800, july 2011.

\bibitem{v2g-architeccture11}
N.~Matta, R.~Rahim-Amoud, L.~Merghem-Boulahia, and A.~Jrad, ``A cooperative
  aggregation-based architecture for vehicle-to-grid communications,'' in {\em
  Global Information Infrastructure Symposium (GIIS), 2011}, pp.~1 --6, aug.
  2011.

\bibitem{v2g-ancillary11}
E.~Sortomme and M.~A. El-Sharkawi, ``Optimal scheduling of vehicle-to-grid
  energy and ancillary services,'' {\em Smart Grid, IEEE Transactions on},
  vol.~PP, no.~99, p.~1, 2011.

\bibitem{Walid}
W.~Saad, Z.~Han, H.~V. Poor, and T.~Basar, ``A noncooperative game for double
  auction-based energy trading between phevs and distribution grids,'' in {\em
  Proc. IEEE International Conference on Smart Grid Communications
  (SmartGridComm)}, (Brussels, Belgium), Oct. 2011.

\bibitem{krishna}
V.~Krishna, {\em Auction Theory}.
\newblock 2002.

\bibitem{Huang2002}
P.~Huang, A.~Scheller-Wolf, and K.~Sycara, ``A strategy-proof multiunit double
  auction mechanism,'' in {\em Proceedings of the first international joint
  conference on Autonomous agents and multiagent systems: part 1}, AAMAS '02,
  2002.

\bibitem{doubleauction2}
P.~Huang, A.~Scheller-Wolf, and K.~Sycara, ``Design of a multi-unit double
  auction e-market,'' {\em Computational Intelligence}, vol.~18, pp.~596--617,
  Feb. 2002.

\bibitem{doubleauction1}
D.~Friedman and J.~Rust, {\em The Double Auction Market: Institutions,
  Theories, and Evidence}.
\newblock Boulder, CO, USA: Westview Press, 1993.

\bibitem{car1}
C.~Silva, M.~Ross, and T.~Farias, ``Evaluation of energy consumption, emissions
  and cost of plug-in hybrid vehicles,'' {\em Elsevier Energy Conversion and
  Management}, vol.~50, pp.~1635--1643, Jul. 2009.

\bibitem{car2}
T.~Motors, ``Roadster innovations: Motor,'' Feb. 2011.

\end{thebibliography}

\end{document}